\newcommand{\TeV}{\,\mathrm{TeV}}  
\newcommand{\GeV}{\,\mathrm{GeV}}
\newcommand{\secref}[1]{Section~\ref{#1}}
\newcommand{\lem}{Lemma}
\newtheorem{lemma}{\lem}
\newcommand{\lemref}[1]{\lem~\ref{#1}}
\newcommand{\lemrefs}[1]{\lem s~\ref{#1}}
\newenvironment{proof}[1][Proof]{\begin{trivlist}
\item[\hskip \labelsep {\bfseries #1}]}{\end{trivlist}}
\newcommand{\definmath}[2] {\def#1{\ifmmode#2\else$#2$\fi}}
\definmath{\sps}{{{\mathrm{Sp\bar{p}S}}}}
\definmath{\mttwo}{m_{T2}}
\definmath{\invfb}{{\mathrm{fb}^{-1}}}
\definmath{\invpb}{{\mathrm{pb}^{-1}}}  
\definmath{\pslash}{{\slashed{p}}}
\definmath{\ptslash}{{\slashed{p}_T}}
\definmath{\ptmiss}{{\slashed{\bf p}_T}}
\definmath{\pt}{{p_T}}
\definmath{\Pt}{{{\bf p}_T}}
\definmath{\cht}{\tilde{\chi}}
\definmath{\ntlone}{{{\cht^0_1}}}
\definmath{\gluino}{{{\tilde{g}}}}
\definmath{\ttbar}{{t\bar{t}}}
\def\lsim{\mathrel{\rlap{\lower4pt\hbox{\hskip1pt$\sim$}}
    \raise1pt\hbox{$<$}}}                % less than or approx. symbol
\def\gsim{\mathrel{\rlap{\lower4pt\hbox{\hskip1pt$\sim$}}
    \raise1pt\hbox{$>$}}}                % greater than or approx. symbol
\begin{document}   
\title{The race for supersymmetry: using \mttwo\ for discovery}
\date{\today}

\author{Alan J. Barr}
\email{a.barr@physics.ox.ac.uk}
\affiliation{Denys Wilkinson Building, Keble Road, Oxford, OX1 3RH, United Kingdom}

\author{Claire Gwenlan}
\email{c.gwenlan@physics.ox.ac.uk}  
\affiliation{Denys Wilkinson Building, Keble Road, Oxford, OX1 3RH, United Kingdom}

%%%%%%%%%%%%%%%%%%%%%%%%%%%%%%%%%%%%%%
\begin{abstract} 
%%%%%%%%%%%%%%%%%%%%%%%%%%%%%%%%%%%%%%
We describe how one may employ a very simple event selection, using only the kinematic variable
\mttwo, to search for new particles at the LHC. 
The method is useful when searching for evidence of models (such as $R$-parity conserving supersymmetry) 
which have a $Z_2$ parity and a weakly-interacting lightest parity-odd particle.
We discuss the kinematic properties which make this variable an excellent
discriminant against the great majority of Standard Model backgrounds.
Monte Carlo simulations suggest that this approach could be used to discover supersymmetry with 
somewhat smaller integrated luminosities (or perhaps lower center-of-mass energies) 
than would be required for other comparable analyses.
%%%%%%%%%%%%%%%%%%%%%%%%%%%%%%%%%%%%%%
\end{abstract}
%%%%%%%%%%%%%%%%%%%%%%%%%%%%%%%%%%%%%%
\maketitle 
%\tableofcontents
%%%%%%%%%%%%%%%%%%%%%%%%%%%%%%%%%%%%%%
\section{Introduction}\label{sec:intro}
%%%%%%%%%%%%%%%%%%%%%%%%%%%%%%%%%%%%%%

The search for new particles cannot be divorced from measurement of their masses. 
Indeed new particles have usually been discovered by measuring differential distributions 
in variables sensitive to their mass 
\cite{Arnison:1983rp,Aubert:1974js,*Arnison:1983mk,*Abachi:1995iq,*Abe:1995hr}.
A notable example was the successful $W$ boson search at the CERN \sps\ collider %%@
\cite{Arnison:1983rp,Arnison:1983zy,*Banner:1983jy} using the transverse mass, $m_T$.
Using the right variables can improve the discrimination between signal and background,
increase the statistical power and thus expedite discovery.

One of the primary goals of the Large Hadron Collider (LHC) is to search for physics beyond the Standard Model. 
Many well-motivated models of new physics \cite{Dimopoulos:1981zb,*Cheng:2002ab,*Cheng:2003ju}  
have a $Z_2$ parity (for example $R$-parity in supersymmetry)
which predicts the presence of stable, presumed weakly-interacting, particles which would contribute to the astronomical dark %%@
matter. Such particles must be produced in pairs and would be invisible to the LHC detectors.

%The momentum fractions $x_{1,2}$ of the protons carried by the interacting partons are unknown, so the
%parton-parton center-of-mass energy and the boost of the collision cannot be determined from the initial state.
%However the momentum of the invisible particle system in the directions perpendicular to the beam can be determined from
%the missing transverse momentum $\ptmiss$ 
%which is calculated from the negative of the vector sum of the visible particles' transverse momenta.

In this paper we explain why the \mttwo\ variable, which was originally proposed \cite{Lester:1999tx} for
supersymmetric particle mass determination in hadron colliders, is also useful in {\em searches} for such models.\footnote{Some of %%@
these 
observations were originally made in a talk at CERN by the authors in May 2007.
Since then, the method has been employed, for example in \cite{Aad:2009wy},
but the underlying properties which represented the motivation for the approach have not previously been published.}
%The variable a generalisation of the more familiar transverse mass ($m_T$)
%and the method recalls the simplicity of the event selection and the resulting high sensitivity, 
%of the $W$ boson discovery at the CERN \sps\ collider \cite{Arnison:1983rp}.

We start by recalling the definition of the transverse mass and its generalisation \mttwo,
and outline their defining properties (\secref{sec:defs}).
In \secref{sec:props} we show that experimentally-interesting kinematic configurations 
often lead to near-minimal values of \mttwo.  In \secref{sec:example} we explain why these properties should
cleanly separate new physics signals from the great majority of the Standard Model background events.
A Monte Carlo simulation of a typical di-jet channel can be found in \secref{sec:sim}.
We discuss the results and conclude in \secref{sec:discussion}.

%We compare its performance to a more recently-proposed alternatives \cite{Randall:2008rw}.

\section{Definitions}\label{sec:defs}

The familiar transverse mass, which was used for the $W$ boson discovery, 
is defined by
\begin{equation}\label{eq:mtdef}
m_T^2 \equiv m_v^2 + m_i^2 + 2 \left( e_v e_i - {\bf v}_T \cdot {\bf q}_T \right) ,
\end{equation}
where $m_v$ and $m_i$ are the masses of the visible and invisible systems respectively
(e.g. the electron and neutrino for the $W$ case),
${\bf v}_T$ and ${\bf q}_T$ are their momenta in the transverse plane,
and the transverse energies are defined by
$$
e_v^2 = {m_v^2 + {\bf v}^2_T} \quad \mathrm{and} \quad e_i^2 = {m_i^2 + {\bf q}^2_T}.
$$
The transverse momentum ${\bf q}_T$ of the unobserved neutrino can be 
inferred from momentum conservation in the directions perpendicular to the beam: 
it is assumed to be equal to the missing transverse momentum, $\ptmiss$ 
which is defined to be the the negative sum of the transverse momenta of all the
particles observed in the detector (i.e. including hadrons in the $W$ case).

Why was $m_T$ useful for the $W$ search?
The defining property of $m_T$ is as follows: for events in which a visible and an invisible system 
originate from the decay of a single parent particle of mass $m_0$, 
and when the correct daughter masses are used, $m_T$ satisfies
\begin{equation}\label{eq:mtprop}
m_T \leq m_0
\end{equation}
by construction\footnote{If the width of the particle and the experimental resolutions are assumed to be small.}, 
with equality when the rapidity of all invisible daughters is equal to that of the visible system. 
The transverse mass gives the lowest event-by-event upper bound on $m_0$. 
The maximum value of $m_T$ over all events therefore delineates the boundary 
between allowed and forbidden values of the parent mass, $m_0$.

For the $W$ boson search the allowed values of $m_T$ could therefore span a range $ m_< \lsim m_T \lsim m_W$ for signal events.
The lower bound is given by $m_<=m_v+m_i$, which is zero if one assumes (as was done implicitly in the original $W$ search papers) %%@
that both $m_v$ and $m_i$ are zero.
Other processes which could have led to lepton + \ptmiss\ final states, for example  
leptonic decays of $\tau$ leptons or $B$ hadrons were also constrained
by similar inequalities, but these backgrounds have smaller values of $m_0$. There was therefore a region
bounded approximately by $m_B \lsim m_T \lsim m_W$ in which there were very few background events
and a preponderance of the signal.\footnote{In \cite{Arnison:1983rp} the lower end of this distribution was 
removed because of threshold cuts on the electron \pt\ and \pslash.}
\par
A corresponding construction can be made for the types of LHC search described in the introduction.
We are now interested in the case in which {\em two} new particles are produced, 
each of which decays to a set of (one or more) visible and (one or more) invisible daughters.
We label the two branches of the decay with the superscripts $^{(1)}$ and $^{(2)}$ to distinguish them,
and for the purposes of this paper assume that the visible particles can be unambiguously
assigned to one or other parent, though this need not be done in general \cite{Lester:2007fq,*Nojiri:2008hy}. 
According to our assumptions, each invisible system should include one dark matter candidate.

As was the case for single-particle decays, there will be some values of parent and (visible and invisible) daughter masses which 
are consistent with the observed momenta, and others which are not.

The \ptmiss\ constraint needs some modification compared to the single decay case. 
For a pair of invisible systems the transverse momentum of each the two invisible-particle systems is 
not individually known, but the sum is constrained by ${\bf q}_T^{(1)} + {\bf q}_T^{(2)} = \ptmiss$.
One can construct the boundary between the allowed and disallowed regions using 
the variable \mttwo\ \cite{Lester:1999tx}
\begin{equation}\label{eq:mttwodef}
\mttwo(v_1, v_2,\ptmiss, m_i^{(1)}, m_i^{(2)}) \equiv \\ 
     \min_{\sum{\bf q}_T = \ptmiss} \left\{ \max\left( m_T^{(1)}, m_T^{(2)} \right) \right\} .\
\end{equation}
For decays of two parents, each of mass $m_0$, \mttwo\ gives the lowest upper bound on $m_0$ consistent with the observed
momenta and the hypothesised daughter masses.
The maximum of \mttwo\ over events therefore delineates the boundary of the domain (in the space of parent and daughter masses) 
which is consistent with the observed momenta \cite{Cheng:2008hk}.
%% The two individual transverse masses are defined by 
%% \begin{equation}\label{eq:mtdefindividual}
%% m_T^2(v,i) \equiv m_v^2 + m_i^2 + 2 \left( e_v e_i - {\bf v}_T \cdot {\bf q}_T \right) .
%% \end{equation}
%% the transverse energies by
%% $$
%% e_v^2 = m_v^2 + {\bf v}^2_T \quad \mathrm{and} \quad e_i^2 = m_i^2 + {\bf q}^2_T.
%% $$
It is this bounding property which makes it ideally suited for measuring 
supersymmetric particle masses at the LHC 
\cite{Cho:2007qv,*Barr:2007hy,Gripaios:2007is,*Cho:2007dh,Cheng:2008hk,Allanach:2000kt,*Barr:2002ex,*Barr:2003rg,*Weiglein:2004hn,%%@
*Meade:2006dw,*Baumgart:2006pa,*Fuks:2006me,*Cho:2007fg,*Conlon:2007xv,*Cho:2007uq,*Cheng:2007xv,*Fuks:2007us,*Ross:2007rm,*Nojiri%%@
%%@
:2007pq,*Nomura:2008pt,*Tovey:2008ui,*:2008gva,*Han:2008gy,*Cho:2008cu,*Serna:2008zk,*Hamaguchi:2008hy,*Kitano:2008sa,*Barr:2008ba%%@
%%@
,*Nojiri:2008vq,*Choi:2008pi,*Brandt:2008nq,*Wienemann:2008jk,*Cho:2008tj,*Burns:2008va,*Barr:2008hv,*Read:2008zz,*Konar:2008ei,*H%%@
%%@
isano:2008ng,*Bai:2009it,*Burns:2009zi,*Matchev:2009iw,Lester:2007fq,*Nojiri:2008hy}. It has also recently been used to measure %%@
the top quark mass at the Tevatron \cite{Lee:2009eg,*cdftop}.

We note that the Lorentz 2+1 dimensional vectors $v = (e_v,{\bf v}_T)$ and $i = (e_i,{\bf q}_T)$ can each represent single %%@
particles or multi-body systems.
For example $v$ could represent an individual lepton, or a jet or a di-jet pair. 
Similarly each of the invisible-particle systems may represent a single massive invisible particle (such as a \ntlone) or a single %%@
particle of negligible mass (neutrino) or a set of invisible particles (e.g. $\ntlone + \nu$).
For composite systems, the composite 2+1 vectors can be formed
from the vector sum of the 2+1 vectors of their individual 
constituents.\footnote{Alternative combinatorial procedures are available
for the {\em visible} system \cite{gaitordraft}. 
%When the individual constituents' $z$ momenta are known 
%one could, for example, choose to find the 3+1-dimensional vector sum of the constituents 
%and then to set to zero the longitudinal component of that sum to obtain a 2+1 vector. 
%This alternative combinatorial method will generate a visible system with the same ${\bf v}_T$ but 
%in general a different (larger) $m_v$.
}

\section{Relevant properties}\label{sec:props}

Later in \secref{sec:example} and \ref{sec:sim}, we shall find that many background processes predict
values of $\mttwo$ near the global minimum, $m_<$.
The value of this minimum is therefore of some interest. %But what is $m_<$?
It follows from \eqref{eq:mttwodef} that 
\begin{equation}\label{eq:mmin}
m_< = \max ( m_<^{(1)},m_<^{(2)} ) \ ,
\end{equation}
where $m_<^{(n)} = m_v^{(n)}+m_i^{(n)}$ is the global minimum of each of the two individual $m_T^{(n)}$.

In order to construct \mttwo\ we require hypotheses for the invisible particle(s) masses $m_i^{(n)}$.
In a search we usually don't know {\it a priori} whether any invisible particles are massive
(or if so, what those masses are) so we choose to set the invisible-particle mass hypotheses $m_i$ to zero.
This is the only value guaranteed to preserve \eqref{eq:mtprop} 
if the invisible particles' true masses are unknown.\footnote{If and when signals suggesting new invisible particles are observed, %%@
their masses can later be determined, for example by constructing variables 
sensitive to the kink in the maximum of \mttwo\ is as a function of the input value of $m_i$ 
\cite{Cho:2007qv,*Barr:2007hy,Gripaios:2007is,*Cho:2007dh}.} It is also the appropriate choice
for background processes where the invisible particles are ($\sim\,$massless) neutrinos.

In the rest of this section we show that the observable
$\mttwo(v_1,v_2, \ptmiss, 0, 0)$ must adopt small values for a variety of kinematic configurations.
In \secref{sec:example} we use these results to show that such configurations are 
satisfied (at least approximately) by very many of the Standard Model processes which represent backgrounds to example searches.

\begin{lemma}\label{lem:equalpair}
When a pair of particles are produced, both with mass $m_0$,
and each parent decays to a visible system $v$ and an invisible system $i$,
then $ \mttwo(v_1,v_2, \ptmiss, 0, 0) \leq m_0 $.
\end{lemma}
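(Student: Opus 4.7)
The plan is to exhibit a single admissible splitting of $\ptmiss$ at which $\max(m_T^{(1)}, m_T^{(2)}) \leq m_0$; since $\mttwo$ is by \eqref{eq:mttwodef} the minimum of $\max(m_T^{(1)}, m_T^{(2)})$ over all such splittings, the bound will then carry over to $\mttwo$ directly.

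First I would consider the particular splitting in which each ${\bf q}_T^{(n)}$ is chosen equal to the actual transverse momentum ${\bf q}_T^{(n),\mathrm{true}}$ of the true invisible system in branch $n$. By conservation of transverse momentum ${\bf q}_T^{(1),\mathrm{true}} + {\bf q}_T^{(2),\mathrm{true}} = \ptmiss$, so this splitting lies in the feasible set of the minimization in \eqref{eq:mttwodef}. For this choice, and with the true invisible-daughter masses $m_i^{(n),\mathrm{true}}$, the single-parent bound \eqref{eq:mtprop} applied separately to each branch gives $m_T^{(n)} \leq m_0$.

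Next I would compare the transverse mass computed with hypothesis mass $0$ to the one computed with $m_i^{(n),\mathrm{true}}$. A short differentiation of \eqref{eq:mtdef} at fixed ${\bf v}_T$ and ${\bf q}_T$ gives $\partial m_T^2 / \partial m_i^2 = 1 + e_v/e_i \geq 0$, so $m_T$ is monotone non-decreasing in the hypothesised invisible mass. Hence setting the hypothesis to $0$ in place of the true value only weakens the transverse mass, $m_T^{(n)}(\cdot,0) \leq m_T^{(n)}(\cdot,m_i^{(n),\mathrm{true}}) \leq m_0$, for each branch. Taking the max over the two branches at this splitting and then the infimum over admissible splittings yields $\mttwo(v_1,v_2,\ptmiss,0,0) \leq m_0$.

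I do not expect a real obstacle: the only slightly delicate step is verifying that the zero mass hypothesis is conservative, and the monotonicity observation above settles that cleanly. The remainder of the argument is essentially the observation that the true momentum assignment is always an allowed point of the internal minimization, so any upper bound obtainable there must also bound the minimum.
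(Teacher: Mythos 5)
Your proof is correct and follows essentially the same route as the paper's: both rest on the feasibility of the true momentum partition combined with the single-parent bound \eqref{eq:mtprop}, plus the monotonicity of $m_T$ in the hypothesised invisible mass (which the paper merely asserts and you verify explicitly via $\partial m_T^2/\partial m_i^2 = 1 + e_v/e_i \geq 0$). The only cosmetic difference is ordering --- you apply the mass monotonicity pointwise at the true partition before minimising, whereas the paper chains the two inequalities at the level of $\mttwo$ itself.
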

\begin{proof}
Consider the chained pair of inequalities 
$\mttwo(v_1,v_2, \ptmiss, 0, 0) \leq \mttwo(v_1,v_2, \ptmiss, m_i^{(1)}, m_i^{(2)}) \leq m_0 $.
The first inequality follows since each $m_T^{(n)}$ is a monotonic function of the (non-negative) parameter $m_i^{(n)}$. 
The second inequality is satisfied by \mttwo\ by construction: one of the partitions of the 
missing momentum is the correct one, and for that partition each $m_T\leq m_0$ by \eqref{eq:mtprop}.
\end{proof}

\begin{lemma}\label{lem:unequalpair}
When two particles are produced with different masses $m_1$ and $m_2$ 
and each parent decays to a visible system $v$ and an invisible system $i$
then $\mttwo(v_1, v_2, \ptmiss, 0, 0) \leq \max(m_1,m_2)$.
\end{lemma}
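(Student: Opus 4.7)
The plan is to mimic the two-step chain of inequalities used in \lemref{lem:equalpair}, replacing the single parent mass $m_0$ by $\max(m_1,m_2)$ in the second step. Concretely, I would show
\[
\mttwo(v_1,v_2,\ptmiss,0,0) \leq \mttwo(v_1,v_2,\ptmiss,m_i^{(1)},m_i^{(2)}) \leq \max(m_1,m_2),
\]
where $m_i^{(n)}$ are the \emph{true} invisible-daughter masses on branch $n$.

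The first inequality is exactly the monotonicity argument recycled from \lemref{lem:equalpair}: each $m_T^{(n)}$ is non-decreasing in the non-negative trial mass $m_i^{(n)}$ (as is immediate from \eqref{eq:mtdef}), hence so are the individual $\max$ and the outer $\min$ in \eqref{eq:mttwodef}. So lowering the trial invisible masses from their true values down to zero can only reduce \mttwo.

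For the second inequality, I would exhibit the correct partition of \ptmiss\ (namely the true split $\mathbf{q}_T^{(1)}$, $\mathbf{q}_T^{(2)}$ of the invisible momenta) as one admissible choice in the minimisation in \eqref{eq:mttwodef}. For that particular partition, the single-parent bound \eqref{eq:mtprop} applies on each branch separately, giving $m_T^{(1)} \leq m_1$ and $m_T^{(2)} \leq m_2$, and therefore
\[
\max\bigl(m_T^{(1)},m_T^{(2)}\bigr) \leq \max(m_1,m_2).
\]
Since \mttwo\ is the minimum over all admissible partitions of $\max(m_T^{(1)},m_T^{(2)})$, it is in particular bounded above by the value taken on the true partition, which finishes the chain.

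There is no real obstacle here: the argument is structurally identical to \lemref{lem:equalpair}, and the only subtlety worth flagging is that one must apply the single-parent bound \eqref{eq:mtprop} branch-by-branch with the appropriate parent mass $m_n$ before taking the $\max$, rather than trying to invoke a single common bound. Everything else (the monotonicity in trial invisible masses, and the fact that the true momentum split is a feasible point of the minimisation) carries over verbatim.
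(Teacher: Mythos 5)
Your proposal is correct and follows exactly the paper's own route: the same two-step chain as in \lemref{lem:equalpair} (monotonicity in the trial invisible masses, then the true \ptmiss\ partition as a feasible point of the minimisation), with the single-parent bound applied branch-by-branch to give $\max(m_T^{(1)},m_T^{(2)})\leq\max(m_1,m_2)$. No differences worth noting.
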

\begin{proof}
As for \lemref{lem:equalpair} except that at the correct \ptmiss\ partition we can only be sure that 
$m_T^{(1)}\leq m_1$ and $m_T^{(2)} \leq m_2$ and so $\max(m_T^{(1)},m_T^{(2)}) \leq \max(m_1,m_2)$ 
for that partition.
\end{proof}

\begin{lemma}\label{lem:ptzero}
When ${\bf v}_T^{(n)} = {\bf 0}$ and $m_v^{(n)} = m_i^{(\overline{n})}=0$ then $\mttwo = m_< $; for $n\in\{1,2\}$.
\end{lemma}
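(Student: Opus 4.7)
The plan is to exhibit an explicit partition of $\ptmiss$ that simultaneously drives both $m_T^{(1)}$ and $m_T^{(2)}$ down to their individual floors $m_<^{(1)}$ and $m_<^{(2)}$. By \eqref{eq:mmin} this would force the $\min$--$\max$ in \eqref{eq:mttwodef} to $m_<$, giving $\mttwo \leq m_<$, and combining with the automatic lower bound $\mttwo \geq m_<$ (each $m_T^{(k)}$ is bounded below by $m_<^{(k)}$ at every partition, so the max is at least $\max(m_<^{(1)},m_<^{(2)})$) finishes the job. Without loss of generality I take $n=1$; the $n=2$ case follows by swapping labels.

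The key observation is that the hypotheses $\mathbf{v}_T^{(1)} = \mathbf{0}$ and $m_v^{(1)} = 0$ together give $e_v^{(1)} = 0$, so that \eqref{eq:mtdef} collapses to $m_T^{(1)} = m_i^{(1)} = m_<^{(1)}$ for \emph{every} trial $\mathbf{q}_T^{(1)}$. Branch 1 is therefore frozen at its own floor independent of the momentum partition, which decouples the optimisation: all the remaining freedom in the $\min$--$\max$ resides in branch 2. On that branch $m_i^{(2)}=0$, and a short manipulation of \eqref{eq:mtdef} using $e_v^{(2)} \geq |\mathbf{v}_T^{(2)}|$ together with the Cauchy--Schwarz inequality applied to $\mathbf{v}_T^{(2)} \cdot \mathbf{q}_T^{(2)}$ yields $m_T^{(2)} \geq m_v^{(2)} = m_<^{(2)}$, with equality at $\mathbf{q}_T^{(2)} = \mathbf{0}$.

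Plugging in the admissible partition $\mathbf{q}_T^{(1)} = \ptmiss$, $\mathbf{q}_T^{(2)} = \mathbf{0}$ then realises $\max(m_T^{(1)}, m_T^{(2)}) = \max(m_<^{(1)}, m_<^{(2)}) = m_<$, completing the upper bound and hence the proof. There is no real obstacle here; the only substantive step is recognising that $e_v^{(1)} = 0$ kills the $\mathbf{q}_T^{(1)}$-dependence of $m_T^{(1)}$, which is precisely what allows branch 2 to be driven to its floor without penalty.
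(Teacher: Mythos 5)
Your proof is correct and follows essentially the same route as the paper's: branch $n$ has $e_v^{(n)}=0$ so its transverse mass is frozen at $m_<^{(n)}$ for any trial momentum, and assigning all of $\ptmiss$ to that branch lets the other branch sit at its own floor, which together with \eqref{eq:mmin} gives $\mttwo=m_<$. You merely spell out more explicitly the lower bound $m_T^{(k)}\geq m_<^{(k)}$ and the equality condition on the second branch, which the paper leaves implicit in its appeal to \eqref{eq:mttwodef} and \eqref{eq:mmin}.
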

\begin{proof}
Without loss in generality let $n=2$ (and so $\overline{n}=1$). Now $m_T^{(2)} = m_<^{(2)}\ \forall\ {\bf q}_T^{(2)}$.
There exists a partition of \ptmiss\ with ${\bf q}_T^{(1)} = {\bf 0}$ 
for which $m_T^{(1)} = m_<^{(1)}$.
The result follows from \eqref{eq:mttwodef} and \eqref{eq:mmin}.
\end{proof}

An important example of \lemref{lem:ptzero} is the decay of a single parent of mass $m_0$.
The second (non-existent) visible system is represented by $v_2=(0,{\bf 0})$.
With $m_i^{(1)}=0$, \lemref{lem:ptzero} shows that $\mttwo=m_< = \max(m_i^{(2)},m_v^{(1)})$ 
where the second equality is a result of \eqref{eq:mmin}.

\begin{lemma}\label{lem:ptmisszero}
When $\ptmiss = {\bf 0}$ and $m_i^{(1,2)} = 0$ then $\mttwo = m_<$.
\end{lemma}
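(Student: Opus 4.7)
The plan is to exhibit an explicit momentum partition that attains the lower bound $m_<$, and then observe that $m_<$ is a trivial lower bound for \mttwo\ anyway, so the minimization in \eqref{eq:mttwodef} must saturate at $m_<$.

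First I would exploit the hypothesis $\ptmiss = {\bf 0}$, which makes the partition ${\bf q}_T^{(1)} = {\bf q}_T^{(2)} = {\bf 0}$ admissible (since $\sum {\bf q}_T = \ptmiss$ is satisfied trivially). At this partition the invisible energies $e_i^{(n)} = \sqrt{m_i^{(n)2} + {\bf q}_T^{(n)2}}$ vanish because both $m_i^{(n)} = 0$ and ${\bf q}_T^{(n)} = {\bf 0}$. Plugging into the definition \eqref{eq:mtdef} collapses the cross term, leaving $m_T^{(n)2} = m_v^{(n)2}$, i.e.\ $m_T^{(n)} = m_v^{(n)} = m_<^{(n)}$ (the last equality using $m_i^{(n)} = 0$). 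Taking the max over the two branches and invoking \eqref{eq:mmin} gives $\max(m_T^{(1)}, m_T^{(2)}) = m_<$ on this partition, so $\mttwo \leq m_<$ by \eqref{eq:mttwodef}.

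For the reverse inequality, I would just remark that every $m_T^{(n)}$ is bounded below by its own global minimum $m_<^{(n)}$ regardless of ${\bf q}_T^{(n)}$, so $\max(m_T^{(1)},m_T^{(2)}) \geq \max(m_<^{(1)},m_<^{(2)}) = m_<$ at every partition, hence the min over partitions also satisfies $\mttwo \geq m_<$. Combining with the previous paragraph yields $\mttwo = m_<$.

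There is no real obstacle here: the only thing to check is that the zero--zero partition is legal, which is exactly what $\ptmiss = {\bf 0}$ guarantees, and that the transverse mass formula simplifies as claimed when all invisible quantities vanish. The subtlety, if any, is purely notational --- making sure not to confuse $m_<^{(n)} = m_v^{(n)} + m_i^{(n)}$ with $m_v^{(n)}$, which coincide only because of the hypothesis $m_i^{(n)} = 0$.
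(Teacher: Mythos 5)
Your proof is correct and follows essentially the same route as the paper: exhibit the trivial partition ${\bf q}_T^{(1)} = {\bf q}_T^{(2)} = {\bf 0}$, note that each $m_T^{(n)}$ then equals $m_<^{(n)}$, and conclude via \eqref{eq:mttwodef} and \eqref{eq:mmin}. Your explicit check of the reverse inequality ($\mttwo \geq m_<$, since $m_<^{(n)}$ is the global minimum of $m_T^{(n)}$) is left implicit in the paper but is a worthwhile addition.
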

%\comment{Above should be $m_i^{(1)}=m_i^{(2)} = 0$ - i.e. invisible not visible masses?}
\begin{proof}
For $\ptmiss = {\bf 0}$ there exists a trivial partition of the missing momentum with 
${\bf q}_T^{(1)} = {\bf q}_T^{(2)} = {\bf 0}$.
For that partition, $m_T^{(1)} = m_<^{(1)}$ and $ m_T^{(2)} = m_<^{(2)}$; the result follows from \eqref{eq:mttwodef} and %%@
\eqref{eq:mmin}.
\end{proof}

\begin{lemma}\label{lem:ptmissparallel}
When $m_i^{(1,2)}=0$, $m_v^{(n)}=0$ and $\ptmiss \parallel {\bf v}_T^{(n)}$ then $\mttwo = m_<$ ; for $n\in\{1,2\}$.
\end{lemma}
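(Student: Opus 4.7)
My plan is to proceed exactly as in the proof of Lemma~\ref{lem:ptmisszero}: exhibit an explicit partition of the missing transverse momentum that simultaneously saturates both $m_T^{(1)} = m_<^{(1)}$ and $m_T^{(2)} = m_<^{(2)}$, and then invoke the min-max definition \eqref{eq:mttwodef} together with \eqref{eq:mmin} to conclude.

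Without loss of generality I take $n=2$, so $m_v^{(2)}=0$ and $\ptmiss \parallel {\bf v}_T^{(2)}$. I propose the partition
$$
{\bf q}_T^{(1)} = {\bf 0}, \qquad {\bf q}_T^{(2)} = \ptmiss ,
$$
which trivially satisfies ${\bf q}_T^{(1)} + {\bf q}_T^{(2)} = \ptmiss$. The computation then splits cleanly into two pieces. For the first branch, ${\bf q}_T^{(1)} = {\bf 0}$ together with the assumed $m_i^{(1)}=0$ gives $e_i^{(1)}=0$ and kills the dot-product term in \eqref{eq:mtdef}, so $m_T^{(1)} = m_v^{(1)} = m_<^{(1)}$. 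For the second branch, using $m_v^{(2)}=m_i^{(2)}=0$ one has $e_v^{(2)} = |{\bf v}_T^{(2)}|$ and $e_i^{(2)} = |\ptmiss|$, and the parallelism assumption forces ${\bf v}_T^{(2)}\cdot\ptmiss = |{\bf v}_T^{(2)}|\,|\ptmiss|$, so the bracket in \eqref{eq:mtdef} collapses to zero and $m_T^{(2)} = 0 = m_<^{(2)}$.

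Combining the two branches, for this particular partition $\max(m_T^{(1)}, m_T^{(2)}) = \max(m_<^{(1)}, m_<^{(2)}) = m_<$ by \eqref{eq:mmin}. Since \mttwo\ is defined as the minimum of this max over all admissible partitions, I obtain $\mttwo \leq m_<$. The reverse inequality $\mttwo \geq m_<$ is automatic from the elementary fact that $m_T^{(n)} \geq m_<^{(n)}$ for every partition, which propagates through the max and then the min. Equality follows.

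The only subtlety I anticipate is the interpretation of ``parallel'': the above argument requires $\ptmiss$ to point in the \emph{same} sense as ${\bf v}_T^{(2)}$ (so that the Cauchy--Schwarz-type bound is saturated rather than maximally violated), and I would flag this in the write-up. The case ${\bf v}_T^{(2)}={\bf 0}$, which is the only ambiguous one, is already covered by \lemref{lem:ptzero} and so requires no separate treatment.
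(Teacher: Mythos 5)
Your proposal is correct and follows essentially the same route as the paper: the paper likewise exhibits the partition that puts all of $\ptmiss$ on the branch with $m_v^{(n)}=0$ and zero on the other, observes that each $m_T^{(n)}$ then attains its global minimum via \eqref{eq:mtdef}, and concludes from \eqref{eq:mttwodef} and \eqref{eq:mmin}. Your explicit remark that ``parallel'' must mean same-sense (so the dot product saturates) is a reasonable clarification of what the paper leaves implicit, but it does not constitute a different argument.
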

\begin{proof}
Without loss of generality let $n=1$.
There exists a partition of \ptmiss\ with ${\bf q}_T^{(1)}=\ptmiss$ and ${\bf q}_T^{(2)}={\bf 0}$.
Each $m_T$ is equal to its global minimum by \eqref{eq:mtdef}, and the result follows from \eqref{eq:mmin}.
\end{proof}

\begin{lemma}\label{lem:ptmisssum}
When $m_i^{(1,2)}=m_v^{(1,2)}=0$, and $\ptmiss$ can be expressed as the sum $\ptmiss=\sum_k x_k {\bf q}_T^{k}$ 
for some real non-negative $x_k$ then $\mttwo=m_<=0$.
\end{lemma}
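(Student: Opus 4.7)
The plan is to exhibit an explicit admissible partition of $\ptmiss$ that drives both transverse masses $m_T^{(n)}$ to zero simultaneously; the minimum over partitions in \eqref{eq:mttwodef} is then squeezed down to $0$, which under the stated massless hypotheses coincides with $m_<$.

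First I would specialise the transverse-mass formula \eqref{eq:mtdef} to the fully massless case $m_v^{(n)}=m_i^{(n)}=0$. Then $e_v^{(n)}=|{\bf v}_T^{(n)}|$, $e_i^{(n)}=|{\bf q}_T^{(n)}|$, and
\begin{equation*}
\bigl(m_T^{(n)}\bigr)^2 \;=\; 2\left(|{\bf v}_T^{(n)}|\,|{\bf q}_T^{(n)}| - {\bf v}_T^{(n)}\!\cdot{\bf q}_T^{(n)}\right),
\end{equation*}
a non-negative quantity that vanishes precisely when ${\bf q}_T^{(n)}$ is a non-negative real multiple of ${\bf v}_T^{(n)}$ (including the degenerate situation in which either vector is $\mathbf{0}$). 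This characterisation of the zero set of $m_T^{(n)}$ is the same observation already exploited in the proof of \lemref{lem:ptmissparallel}.

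Next I would use the hypothesis to construct the partition. Reading the decomposition $\ptmiss=\sum_k x_k\,{\bf q}_T^{k}$ in the composite-system sense of the footnote beneath \eqref{eq:mttwodef}, each term can be grouped with one of the two visible branches; collecting contributions yields non-negative coefficients $\alpha_1,\alpha_2\geq 0$ with $\ptmiss=\alpha_1\,{\bf v}_T^{(1)}+\alpha_2\,{\bf v}_T^{(2)}$. The trial partition ${\bf q}_T^{(n)}=\alpha_n\,{\bf v}_T^{(n)}$ then satisfies ${\bf q}_T^{(1)}+{\bf q}_T^{(2)}=\ptmiss$, so it is feasible in \eqref{eq:mttwodef}, and the formula from the previous step immediately gives $m_T^{(n)}=0$ for both $n$.

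Combining these, $\max(m_T^{(1)},m_T^{(2)})=0$ at this particular partition, so $\mttwo\leq 0$ by \eqref{eq:mttwodef}. Under the given mass hypotheses $m_<=\max(m_<^{(1)},m_<^{(2)})=0$ via \eqref{eq:mmin}, and the general lower bound $\mttwo\geq m_<$ closes the sandwich to give $\mttwo=m_<=0$. The only delicate step is the bookkeeping that identifies the abstract sum $\sum_k x_k\,{\bf q}_T^{k}$ with a two-term non-negative combination along the composite visible directions ${\bf v}_T^{(1)}$ and ${\bf v}_T^{(2)}$; once that identification is made, the rest is a single-line verification together with the trivial lower bound.
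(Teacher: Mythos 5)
Your proof is correct and follows essentially the same route as the paper's: both exhibit the partition ${\bf q}_T^{(n)}=x_n{\bf v}_T^{(n)}$ determined by the given non-negative coefficients, observe that each $m_T^{(n)}$ then attains its global minimum of zero, and conclude via \eqref{eq:mttwodef} and \eqref{eq:mmin}. You merely make explicit two steps the paper leaves implicit --- the massless formula $\bigl(m_T^{(n)}\bigr)^2=2\bigl(|{\bf v}_T^{(n)}||{\bf q}_T^{(n)}|-{\bf v}_T^{(n)}\!\cdot{\bf q}_T^{(n)}\bigr)$ showing that a parallel, non-negatively scaled ${\bf q}_T^{(n)}$ forces $m_T^{(n)}=0$, and the reading of the hypothesised sum as a non-negative combination along the two visible directions --- both of which match the paper's intent.
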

\begin{proof}
For the partition of \ptmiss\ given by these $x_k$, ${\bf q}_T^{(n)} \parallel {\bf v}_T^{(n)}$ simultaneously 
for both $n\in\{1,2\}$. For that partition each $m_T^{(n)}=m_<^{(n)}=0$.
The result follows from \eqref{eq:mttwodef} and \eqref{eq:mmin}.
\end{proof}

\begin{lemma}\label{lem:constituent}
If either or both visible systems are composite, then
each of the results in \lemrefs{lem:equalpair} -- \ref{lem:ptmisssum} 
also hold when either or both the composite
Lorentz 2+1 vectors $v$ are replaced by any subset of their (respective) constituents.
\end{lemma}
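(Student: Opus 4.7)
The strategy is to reduce each subset case of \lemref{lem:constituent} to the corresponding statement among \lemrefs{lem:equalpair}--\ref{lem:ptmisssum} by regarding any discarded constituents as part of the invisible system. First, write the composite visible system on either branch as $v^{(n)} = v'^{(n)} + d^{(n)}$, where $v'^{(n)}$ is the retained subset and $d^{(n)}$ collects the removed constituents. The decay topology is unchanged under this reinterpretation: each parent still decays into a visible system (now $v'^{(n)}$) and an invisible system (now $i^{(n)} + d^{(n)}$), with the corresponding missing transverse momentum $\ptmiss' = \ptmiss + {\bf d}_T^{(1)} + {\bf d}_T^{(2)}$ equal to the sum of the new invisible transverse momenta.

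For \lemrefs{lem:equalpair} and \ref{lem:unequalpair}, the hypothesis (a pair of parents of given masses) is preserved under this reinterpretation, so the corresponding lemma applies directly to the reinterpreted decay and immediately yields the bound for the subset. Concretely, for \lemref{lem:equalpair}, the quantity $\mttwo(v_1',v_2',\ptmiss',0,0)$ is simply the \mttwo\ of a pair of parents of mass $m_0$ whose visible systems are $v_1',v_2'$ and whose invisible systems are $i^{(1)}+d^{(1)}$ and $i^{(2)}+d^{(2)}$; invoking \lemref{lem:equalpair} on this reinterpreted configuration gives $\mttwo \leq m_0$. The argument for \lemref{lem:unequalpair} is identical.

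For \lemrefs{lem:ptzero}--\ref{lem:ptmisssum}, the kinematic hypotheses (${\bf v}_T^{(n)} = {\bf 0}$, $\ptmiss \parallel {\bf v}_T^{(n)}$, $\ptmiss = \sum_k x_k {\bf q}_T^k$, etc.) are re-imposed on the subset $v'$ and on $\ptmiss'$. The proofs of these lemmas only use the definition \eqref{eq:mttwodef} together with the ability to exhibit a particular partition of $\ptmiss$ that saturates each $m_T^{(n)}$ at its global minimum; those constructions carry through verbatim with $v'$ in place of $v$ and $\ptmiss'$ in place of $\ptmiss$, so each equality is inherited.

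The main obstacle is bookkeeping: one must take care that \ptmiss\ is consistently updated to $\ptmiss'$ when the discarded constituents are reclassified as invisible, so that the ``correct'' partition still constitutes an admissible partition of the missing-momentum argument of \mttwo. Once this substitution is made uniformly, each subset case reduces to a single direct invocation of the corresponding lemma, and no new calculation is needed.
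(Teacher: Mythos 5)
There is a genuine gap: your reinterpretation changes the observable being bounded. By reclassifying the discarded constituents $d^{(n)}$ as invisible you are forced to replace the missing--momentum argument by $\ptmiss' = \ptmiss + {\bf d}_T^{(1)} + {\bf d}_T^{(2)}$, so what your argument actually controls is $\mttwo(v_1',v_2',\ptmiss',0,0)$. But the lemma --- and every use made of it, e.g.\ the ``any leptonic decays'' rows of Table~\ref{tab:backgrounds} and the analysis of \secref{sec:sim}, where only the two leading jets are supplied to \mttwo\ --- concerns $\mttwo(v_1',v_2',\ptmiss,0,0)$ with the \emph{measured} \ptmiss\ unchanged: the discarded constituents are still observed and still enter the negative vector sum defining \ptmiss; they are merely not included in $v_1$ or $v_2$. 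The two quantities minimise over different constraint sets ($\sum{\bf q}_T=\ptmiss$ versus $\sum{\bf q}_T=\ptmiss'$) and neither bounds the other in general (e.g.\ $\ptmiss={\bf 0}$ forces $\mttwo=m_<$ while $\ptmiss'\ne{\bf 0}$ need not, and vice versa), so your bound, though internally consistent for the reinterpreted decay, does not transfer to the quantity the lemma is about.

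The missing idea is a monotonicity statement at \emph{fixed} hypothesised ${\bf q}_T$: writing $m_T^2=\left(\sum_k c_k\right)^2$ over all constituents (visible and invisible) and peeling off one visible constituent $c_j$ gives $m_T^2=\sigma^2+c_j^2+2\,\sigma\cdot c_j$ with $\sigma=\sum_{k\ne j}c_k$; since every constituent 2+1 vector is future-pointing time-like or null, each term is non-negative, so $\sigma^2\le m_T^2$, i.e.\ deleting a visible constituent can only decrease $m_T$ without touching ${\bf q}_T$. Evaluating this at the partition of the \emph{original} \ptmiss\ that minimises the composite \mttwo\ yields $\mttwo(v_1',v_2',\ptmiss,0,0)\le\mttwo(v_1,v_2,\ptmiss,0,0)$, after which \lemrefs{lem:equalpair} and \ref{lem:unequalpair} apply. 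For \lemrefs{lem:ptzero} -- \ref{lem:ptmisssum} no such step is needed: their hypotheses and proofs refer only to the vectors actually input to \mttwo, so with the conditions imposed on $v'$ and the \emph{unchanged} \ptmiss\ they go through verbatim; re-imposing them on $\ptmiss'$, as you do, again targets the wrong observable.
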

\begin{proof}
\lemrefs{lem:ptzero} -- \ref{lem:ptmisssum} follow exactly as before.
For \lemrefs{lem:equalpair} and \ref{lem:unequalpair} it is sufficient to show that
\mttwo\ cannot be increased when a constituent is removed from a composite (visible) system.

Now $m_T^2$ is the inner product of a sum of Lorentz 2+1 vectors
$$m_T^2 = \left(v+i\right)^2 = \left(\sum c_k\right)^2,$$ 
where the sum runs over all the constitutent $c_k$, whether visible or invisible.
Separating one of the visible particles, $c_j$ and letting $\sigma=\sum_{k\ne j} c_k$,
$$m_T^2 = \sigma^2 + v_j^2 + 2 \sigma\cdot v_j\ .$$
Since each of the constituent vectors is time-like (or null)
each of the terms is non-negative and $\sigma^2\leq m_T^2$. 
But $\sqrt{\sigma^2}$ is precisely the transverse mass one obtains if $v_j$ is omitted,
so the transverse mass cannot be increased by omitting a particle.
This result also holds at the partition chosen by the minimisation of 
\eqref{eq:mttwodef} so \mttwo\ is never increased by omitting one 
(or by induction more than one) of the visible particles.
\end{proof}

Physically relevant configurations will rarely (if ever) conform to the precise
configurations described in \lemrefs{lem:equalpair} -- \ref{lem:constituent}; parent particles will be off-mass-shell, 
Standard Model particles will have small but non-zero masses, \ptslash\ will never be exactly zero
nor vectors exactly parallel.
However, since $m_T$ is a continuous function of its inputs,
kinematic configurations close to those described above will have upper bounds close to these idealised cases.

\section{Example}\label{sec:example}

The properties described in \secref{sec:props} turn out to be particularly useful for various LHC searches.
We demonstrate the reasons why by examining the concrete example of pair-production of heavy, 
strongly-interacting particles of mass $m_0$, each of which decays to a 
light-quark jet and an invisible particle. The characteristic final state is thus two (usually high-\pt) jets 
with significant \ptslash.
%% \begin{equation}\label{eq:finalstate}
%% j + j + \ptmiss .
%% \end{equation}
Examples of models which could lead to such a final state can be found in \autoref{tab:models}.

%% The signature is two jets in association with missing transverse momentum.
\begin{table}[ht]
\begin{center}
\begin{tabular}{l l c l}
Supersymmetry : \hspace{5mm} & $\tilde{q}\,\bar{\tilde{q}} $&$\to$&$ q \ntlone\, \bar{q} \ntlone$ \\
UED           : \hspace{5mm} & $q_1\,\bar{q}_1             $&$\to$&$ q \gamma_1\, \bar{q} \gamma_1 $ \\
Leptoquarks   : \hspace{5mm} & $LQ\, \overline{LQ}         $&$\to$&$ q \nu\, \bar{q}\bar{\nu}$ .
\end{tabular}
\caption{\label{tab:models}
Examples of models of new physics producing di-jets in association with \ptslash.}
\end{center}
\end{table}

\begin{table*}[tbh]
\begin{minipage}{0.75\linewidth}
\begin{tabular}{| l | l | l |}
\hline
Process & $\mttwo(v_1, v_2, \ptmiss, 0, 0)$  & Comments \\
%\hline
%\multicolumn{3}{c}{Background}\\
\hline
QCD di-jet $\to$ hadrons          & $=\max m_j$ by \lemrefs{lem:equalpair},\ref{lem:ptmisszero} & \\
QCD multi jets $\to$ hadrons     & $=\max m_j$ by \lemref{lem:ptmisszero}  & \\
\ttbar\ production               & $=\max m_j$ by \lemref{lem:ptmisszero}  & fully hadronic decays\\
                                 & $\le m_t$ by \lemrefs{lem:equalpair},\ref{lem:constituent}   & any leptonic decays\\
Single top / $tW$                & $=\max m_j$ by \lemref{lem:ptmisszero}  & fully hadronic decays\\
                                 & $\le m_t$ by \lemrefs{lem:unequalpair},\ref{lem:constituent} & any leptonic decays\\
Multi jets: ``fake'' \ptslash    & $=\max m_j$ by \lemref{lem:ptmissparallel} & single mismeasured jet\footnotemark[1]\\
                                 & $=\max m_j$ by \lemref{lem:ptmisssum} & two mismeasured jets\footnotemark[1]\\
Multi jets: ``real'' \ptslash    & $=\max m_j$ by \lemref{lem:ptmissparallel} & single jet with leptonic $b$ %%@
decay\footnotemark[1]\\
                                 & $=\max m_j$ by \lemref{lem:ptmisssum} & two jets with leptonic $b$ decays\footnotemark[1]\\
 $Z \to \nu\bar{\nu}$            & $=0$ by \lemref{lem:ptzero}      & \\
 $Z\,j \to \nu\bar{\nu}\,j$      & $=m_j$ by \lemref{lem:ptzero}      & one ISR jet\footnotemark[1] \\
 $W \to \ell\nu$ \footnotemark[2]& $=m_\ell$ by \lemref{lem:ptzero}      & \\
 $W\,j \to \ell\nu\,j$ \footnotemark[2] & $\le m_W$ by \lemref{lem:unequalpair}      & one ISR jet\footnotemark[1]\\
 $WW \to \ell\nu \ell\nu$ \footnotemark[2] & $\le m_W$ by \lemref{lem:equalpair}   & \\
 $ZZ\to\nu\bar{\nu}\nu\bar{\nu}$ & $= 0$   by \lemref{lem:ptzero}      & also $= m_j$ for one ISR jet\footnotemark[1]\\
%\hline
%\multicolumn{3}{c}{Signal}\\
\hline
$LQ\, \overline{LQ}\, \to q\nu \bar{q}\bar{\nu} $             & $\le m_{LQ} $       & \multirow{3}{*}{{\huge\}}~i.e. can take %%@
large values}\\
$\tilde{q}\,\bar{\tilde{q}} \to q \ntlone\, \bar{q} \ntlone $ & $\le m_{\tilde{q}}$ & \\
$q_1, \bar{q}_1 \to q \gamma_1, \bar{q} \gamma_1 $ & $\le m_{q_1}$ & \\
\hline
\end{tabular}
\footnotetext[1]{~Assuming that the relevant jet(s)
are identified with one of the two visible particle systems $v^{(n)}$.}
\footnotetext[2]{~Even if the lepton is mistaken for a jet.}
\caption{\label{tab:backgrounds}
Examples of Standard Model backgrounds, various signals, and associated values of \mttwo.
In the cases marked $^{b}$, there are more restrictive upper bounds
if the lepton is assumed not to be misidentified as a jet.
}
\end{minipage}
\end{table*}

%% \footnote{
%% \label{jetcondition}assuming that the mis-measured jet(s)  (or the jet(s) containing a leptonic decay)
%% are identified with one of the two visible particle systems $v^{(n)}$.}
When considering how to search in such channels, a variety of levels of sophistication can be envisioned.

In a typical cut-based search one would select events with large \ptmiss\ and two high-\pt\ jets;
require sufficiently large azimuthal angles $\Delta\phi_j$ between
each jet and the \ptmiss\ vector to reduce backgrounds from ``fake''
(from the mismeasurement of one of the jet energies) or ``real'' (from neutrinos in jets) \pslash\ from the jet;
and apply further cuts to reduce background processes
(such as leptonic \ttbar\ decays) which produce neutrinos and high-\pt\ jets with larger $\Delta\phi_j$ \cite{Aad:2009wy}.
The remaining events should be signal-enriched, and so, 
provided that the residual background contributions can be understood,
one can check for an excess of events not explained by the Standard Model.
The most difficult task is to understand the contribution of the background to the selected events --
for which a various techniques have been proposed \cite{Aad:2009wy,cmsphystdr}.
Applying such a set of many single-variable cuts is fairly straightforward 
but is somewhat wasteful of signal events
(which one can assume will be in scarce supply at the time of any interesting discovery).

At the other extreme, the formally optimal search method (needing the fewest events)
would require calculation of the likelihood for all events for every signal and background hypothesis.
This is often prohibitively difficult in practice even if all such hypotheses can be identified and modelled.

A pragmatic approach, intermediate in complexity,
is to find a single, easily-calculable observable which, while not optimal in the formal 
statistical sense, still gives very good discrimination between the majority of signal-like and background-like events, based on %%@
some rather general principles such as relativistic kinematics.
In what follows we shall find that \mttwo\ is just such an observable.

For the example search channels, we identify the two visible systems $v^{(n)}$ with the 
two highest-\pt\ jets.
Our jets, though massive, should have a sufficiently small mass that $m_j \approx 0$ is a good approximation.
For the reasons discussed in \secref{sec:props} we set both $m_i^{(n)}$ to zero.
For these choices, $m_<$ is small (equal to the larger of the $m_j^{(n)}$) and the mass conditions 
required for \lemrefs{lem:ptzero} -- \ref{lem:constituent} are 
satisfied in the $m_j\to 0$ limit.

\autoref{tab:backgrounds} lists a variety of Standard Model processes which
form backgrounds to the search channel.
From \lemrefs{lem:equalpair} -- \ref{lem:constituent} 
we expect to find restrictive upper bounds on \mttwo\ for many of these.
Processes 
with small \ptmiss;
with small jet \pt; 
with small $\Delta\phi_j$;
or from production of one or two Standard Model particles;
all bound $\mttwo$ from above.
Such processes constitute the majority of both the ``physics'' and ``detector'' backgrounds 
to this channel so one can remove the vast majority of the background simply by requiring large \mttwo.
No additional cuts are required -- other than perhaps modest trigger requirements on jet \pt\ or on \ptmiss.

An example ``physics'' background is \ttbar\ production.
This is pair-production of equal-mass particles, so \mttwo\ is bounded above by the mass of the top quark $m_t$ 
by \lemref{lem:equalpair}. One could argue that this is not a very strict bound,
but it is much smaller than masses of the particles typically being searched for, and the top is the heaviest known Standard Model %%@
particle, so other similar background processes, e.g. hadronic $\tau^+\tau^-$ decays should adopt smaller values of \mttwo\ again.

An example of a detector-induced background is ``fake'' \ptslash\ --
events with no invisible particles, but where a substantial fraction of 
the energy of one of the leading jets is lost. This can happen when energy is
deposited in inactive material the detector (e.g. cables, services or support structures).
Such pathological events cannot necessarily be cleanly identified.
In these cases the detector usually measures a jet with Lorentz 2+1 vector $j^\prime = \alpha j$ ($0<\alpha<1$),
and one gains a contribution to \ptmiss\ of $(\alpha-1){\bf j}_T$.
In the absence of any other source of missing momentum, 
$\ptmiss \parallel {\bf j}_T$, so $\mttwo\to m_<$ by \lemref{lem:ptmissparallel} or \ref{lem:ptmisssum}.
Similar arguments apply to heavy-quark jets where leptonic decays
lead to production of neutrinos close to the jet axis.

The other backgrounds in \autoref{tab:backgrounds} are also forced to small values of 
\mttwo. The least restrictive is $\leq m_t$ since the top is (assumed here to be) the heaviest Standard Model particle.

What values of \mttwo\ are expected for any new particles?
At the upper end it is clear from \lemref{lem:equalpair}, 
that $\mttwo \leq m_0$ for the processes in \autoref{tab:backgrounds}.
One needs a significant number of events with 
$\mttwo>m_t$ to have a signal region which is relatively free of background.
Now if the correct value of $m_i$ were to be used then the upper bound ($m_0$) would be saturated
since there is a significant density of states with 
$\mttwo(v_1, v_2, \ptmiss, m_i, m_i)$ close to $m_0$.
We have chosen to input the lowest conceivable value $m_i\to 0$
rather than the true invisible-particle mass, so the argument does not 
prove that the bound is saturated. 
However one can see from \eqref{eq:mtdef} that, provided $m_i \ll |{\bf q}_T|$, then events which are close to maximal
when the true $m_i$ is used will also remain close when we replace this by $m_i=0$.

We therefore expect to find a large number of signal events, 
and very little background, in the region approximately bounded by 
$m_t \lsim \mttwo \lsim m_0$, where $m_0$ is the mass of the new particle.

%%%%%%%%%%%%%%%%%%%%%%%%%%%%%%%%%%%%%%

\section{Simulation}\label{sec:sim}
\begin{figure}[Htp]
 \begin{center}
  \includegraphics[width=\linewidth,height=0.8\linewidth,trim= 12mm 2mm 8mm 2mm,clip]{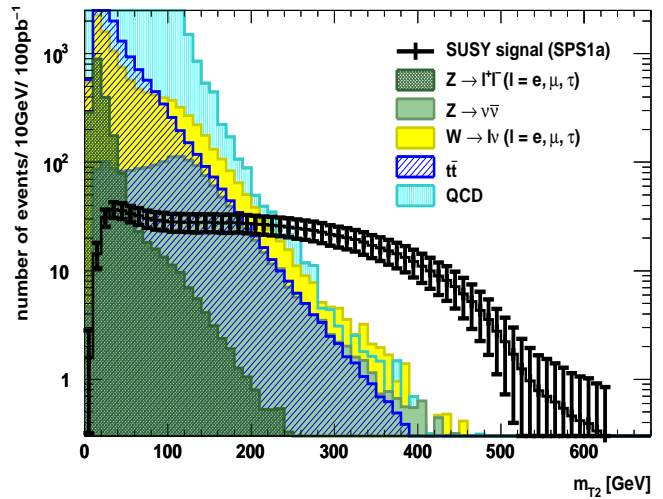}
\caption{
Distribution of \mttwo\ for events with two or more jets with $\pt > 50\GeV$ (and no other cuts).
For the signal point the squark masses are in the range $500 \lsim m_{\tilde{q}} \lsim 600\GeV$ and the gluino mass is close to $600\GeV$.
}
\label{fig:simulation}
\end{center}
\end{figure}

\begin{table*}[bth]
\begin{minipage}{0.75\linewidth}
\begin{tabular}{| l | c | c |}
\hline
{sample} & LO cross section (pb) & number of events \\ \hline
QCD ($17<p_T<35\GeV$)     & $9.44\times 10^8$ & $6\times 10^7$\\
QCD ($35<p_T<70\GeV$)     & $5.99\times 10^7$ & $5\times 10^7$\\
QCD ($70<p_T<140\GeV$)    & $3.45\times 10^6$ & $5\times 10^7$\\
QCD ($140<p_T<280\GeV$)   & $1.57\times 10^5$ & $5\times 10^7$\\
QCD ($280<p_T<560\GeV$)   & 5280 & $4\times 10^7$\\
QCD ($560<p_T<1120\GeV$)  & 116 & $3\times 10^7$ \\
QCD ($1120<p_T<2240\GeV$) & 1.11 & $2\times 10^7$ \\
QCD ($p_T>2240\GeV$)      & $1.15\times 10^{-3}$ & $5\times 10^6$\\
$t\bar{t}$  & 231 & $2\times 10^7$ \\
$W \rightarrow \ell\nu$+jets & 17,100 & $1.5\times 10^8$ \\
$Z \rightarrow \ell^\pm\ell^\mp$+jets  & 1780  & $9\times 10^7$ \\
%$W \rightarrow e\nu$+jets & 5710 & $5\times 10^7$ \\
%$W \rightarrow \mu\nu$+jets    & 5710 & $5\times 10^7$ \\
%$W \rightarrow \tau\nu$+jets   & 5700 & $5\times 10^7$ \\
%$Z \rightarrow ee$+jets        & 595  & $3\times 10^7$ \\
%$Z \rightarrow \mu\mu$+jets    & 595  & $3\times 10^7$ \\
%$Z \rightarrow \tau\tau$+jets  & 594  & $3\times 10^7$ \\
$Z \rightarrow \nu\nu$+jets    & 3440 & $5\times 10^7$ \\ \hline
SUSY signal (SPS1a)  & 13.5     & $2\times 10^6$ \\ 
\hline
\end{tabular}
\caption{\label{tab:samples}
Signal and background processes together with their leading order cross-sections, and number of events generated for each.
}
\end{minipage}
\end{table*}

%% We anticipate that an early use of \mttwo\ at the LHC will be
%% in searches involving two high $p_T$ jets in association with missing momentum. 
%% Such events could be generated for example from squark pair decays $q_R \to q \ntlone$ in supersymmetric models or 
%% by other processes where pairs of strongly interacting parents decay to weakly interacting stable daughters.

%To demonstrate the example we generate NUMBERS AND TYPES OF events with Herwig++ 2.3.2 \cite{Bahr:2008rn,*Bahr:2008mn}.
To illustrate the results of \secref{sec:props} and \secref{sec:example} we generate Monte Carlo signal and background samples %%@
with Herwig++~2.3.2 \cite{Bahr:2008rn,*Bahr:2008mn}. 
The background processes simulated are QCD, $t\bar{t}$, $W\rightarrow \ell\nu$+jets, $Z\rightarrow \ell^+\ell^-$+jets, 
and $Z\rightarrow \nu\bar{\nu}$+jets. 
The contribution from diboson+jets is expected to be very small \cite{Aad:2009wy} so is not considered here.  
In order to generate sufficient QCD events in the high-\pt\ region, eight samples were generated in slices of the $p_T$ of the %%@
hard scatter.    
For the SUSY signal, the SPS1a point \cite{allanach:2002sp} is used ($m_0=100\GeV, ~m_{1/2}=250\GeV, ~A_0=-100\GeV, ~\tan\beta=10, %%@
~\mu > 0$), as calculated by SPheno~2.2.3 \cite{Porod:2003sp}.  
Table~\ref{tab:samples} lists the leading order cross sections calculated by Herwig++, and the number of events generated for each %%@
of the processes considered.

We cluster hadrons (and $\pi^0$s) with fiducial pseudorapidity ($|\eta|<5$) and momentum ($\pt>0.5\GeV$) into jets 
using the fastjet~\cite{Cacciari:2005hq} implementation of the anti-$k_T$ algorithm \cite{Cacciari:2008gp}. 
We use the $E$ combination scheme and set $R=0.4$ and $p_T^{\min}=10\GeV$.
To simulate the detector effects we smear the majority ($1-\epsilon$) of the jet energies by a Gaussian probability density %%@
function of width 
$$\sigma(E)/E_j = \left(0.6/\sqrt{E_j[\GeV]}\right)\, \oplus\, 0.03$$ 
where $E_j$ is the unsmeared jet energy. This resolution is typical of one of the general-purpose LHC detectors %%@
\cite{Aad:2009wy,cmsphystdr}.
Since the tails of the \ptslash\ distribution are often dominated by badly mismeasured jets,
we simulate pathological energy-loss by applying a different smearing function to the remaining fraction $(\epsilon=0.1\%)$ of the %%@
jets\footnote{
\cite{Aad:2009wy} suggests a larger value of $\epsilon\sim 1\%$. We find a smaller value
better matches the tails of the \ptmiss\ and \mttwo\ distributions found in full simulation.
The detailed form of the transfer function clearly needs to be determined from the collision data, 
but our findings are not materially altered by changing epsilon from 0.1\% to 1\%. 
} with probability density:
$$
P(E) = \left\{ \begin{array}{ll} 2E/E_j^2\ & \mathrm{for} ~ (0<E<E_j) \\ 0 & \mathrm{elsewhere}\end{array}\right. \ .
$$
The missing transverse momentum is calculated from the negative vector sum of the visible fiducial hadrons (including $\pi^0$)
and is corrected for the jet smearing. 
We impose the simple requirement that each event contains at least two jets with $\pt>50\GeV$. 
We then take the two highest \pt\ jets as $j_{1,2}$ and calculate $\mttwo(j_1, j_2, \ptmiss, 0, 0)$,
for all events. 
We normalise to 100 \invpb\ (using the leading order cross sections for both signal and background).
The resulting distribution can be seen in \autoref{fig:simulation}.

As predicted, the region $m_t\lsim \mttwo \lsim m_{\tilde{q}}$ is dominated by signal events.
The great majority of the Standard Model background events are found at small \mttwo, 
as required by the arguments of \secref{sec:example}.
The remaining backgrounds for which \mttwo\ is not bounded above by the arguments in \secref{sec:example} are:
$Z\to\nu\bar{\nu}$ in association with two hard jets from initial state radiation (ISR);
$W\to\mathrm{leptons}$ plus two hard ISR jets; and three- (or more-) jet production where one of the jets looses a very large %%@
amount of energy, and that jet is not one of the two highest \pt\ jets input to \mttwo.
For the signal point examined, and with the level of sophistication used for our simulations,
we find that the residual backgrounds for $\mttwo\gsim m_t$ are well below the supersymmetric signal predictions.

%\begin{itemize}
%\item{\label{item:z}high-\pt\ $Z$ produced in assocation with two hard jets from initial state radiation (ISR) with %%@
%$Z\to\nu\bar{\nu}$,}
%\item{\label{item:w}$W\to\mathrm{leptons}$ plus two hard ISR jets, and}
%\item{\label{item:qcd}three- (or more-) jet production where one of the jets looses a very large amount of energy, and that jet %%@
%is not one of the two highest \pt\ jets input to \mttwo.}
%\end{itemize}

\section{Discussion}\label{sec:discussion}

The numerical simulations of \secref{sec:sim} confirm the analytic results of \secref{sec:props};
\mttwo\ adopts small values -- equal to or less than the mass of Standard Model particles
-- for the vast majority of events from the background processes.
The region $\mttwo \gsim m_t$ is signal-dominated; for SPS1a a single cut requiring
$\mttwo > 230 \GeV$ results in a statistical significance $S/\sqrt{S+B}$ of 15,
for 100~\invpb\ of integrated luminosity at $\sqrt{S}=10\TeV$.
This is a higher significance than is found when applying 
to the same simulated events the selections of comparable multi-cut-based analyses 
e.g. \cite{Aad:2009wy,Randall:2008rw,cms:susy08005}. 
%\footnote{\cite{Randall:2008rw} 
%considered  \mttwo\ as part of a more complicated event selection, 
%which would be expected to reduces the sensitivity.
%\comment{this statement sits somewhat oddly with our other statement that jet pt cuts only affect low \mttwo.} Ok fixed

Applying a single-variable kinematic selection based solely on \mttwo\ would be sufficient to separate the signal from the %%@
background. 
In practice \mttwo\ is unlikely to be calculated by the trigger algorithms (certainly not at the first level) 
so some thresholds for the jet \pt s
and the \ptslash\ will be required in order to keep trigger rates within the available bandwidth,
particularly at higher instantaneous luminosities. 
Our simulations show that adding typical trigger requirements has the effect of removing events at small \mttwo\
(as would be expected from \lemrefs{lem:ptzero},\ref{lem:ptmisszero}). Reasonable thresholds leave the 
large \mttwo\ region -- and hence the statistical significance -- practically unchanged.
The lower \mttwo\ end of the distribution can be recovered using lower-threshold prescaled triggers,
so it can be used as a control region.

The remaining backgrounds at large \mttwo\ are dominated by states originating from more than two 
(massive or high-\pt) Standard Model particles (e.g. $Zjj$, $Wjj$, $t\bar{t}j$, $jjj$). 
The bounds of \secref{sec:props} do not apply to these 
since the variable \mttwo\ was developed for the explicit two-parent case.
To reduce the residual contribution from these $n>2$-particle topologies 
one might consider using, rather than \mttwo, an $n$-parent generalisation of the transverse mass:
\begin{equation}\label{eq:mtndef}
m_{Tn} \equiv \min_{\sum {\bf q}_T = \ptmiss} \{\max m_T^{(i)}\}\quad \mathrm{with}\quad i\in\{1,\ldots,n\}.
\end{equation}
Using \eqref{eq:mtndef} $n$-body background topologies (and also $m<n$-body topologies by a generalisation of %%@
\lemref{lem:unequalpair}) 
would be bounded from above by the mass of the heaviest parent.
The problem would be that one would no longer expect pair-produced signal topologies 
(such as those in \autoref{tab:models}) to obtain $m_{Tn}$ values close to the mass of the new heavy parent.
Indeed a generalisation of \lemref{lem:ptzero} shows that for the simplest $Z_2$ signal case, 
di-jet + \ptslash, $m_{T3}$ is forced towards the small value $(\max m_j)$ so one loses the discrimination power of \mttwo.
Such $n>2$ generalisations are therefore only likely to be appropriate when $n$ heavy signal particles are
expected to be produced.

Our introduction focused on the simplest decay topologies of pair-produced heavy particles such as
$\tilde{q}\,\bar{\tilde{q}} \to q \ntlone\, \bar{q} \ntlone$
but our simulations also show excellent discrimination in more complicated cases using the same \mttwo\ variable.
Decay sequences with many steps, or indeed individual multi-daughter decays, such as
$\gluino\gluino\to q \bar{q} \ntlone q \bar{q} \ntlone$ (via three-body decays),
could have been input to \mttwo\ in a number of ways. 
There is no unique choice for constructing the two `visible particle systems' from the $(k>2)$ decay products.
We could have chosen to form two composite systems (two di-jet systems for the gluino example), and use these as the visible %%@
inputs to \mttwo\ \cite{Cho:2007qv,*Barr:2007hy,Lester:2007fq,*Nojiri:2008hy}. 
This construction would have provided a large number of signal events close to the kinematic boundary
and so would be appropriate for mass determination.
However forming di-jet composite objects produces visible systems which no longer have $m_v\approx 0$, 
so one would lose the desirable properties of \lemrefs{lem:ptzero}, \ref{lem:ptmissparallel} -- \ref{lem:constituent}.
For our search, even though cascade or multi-body decays are expected, 
we have still choosen to form $\mttwo(j_1, j_2, \ptmiss, 0, 0)$ using {\em only} the two highest \pt\ jets --- precisely as 
for the simpler di-jet topology.
This way we retain the desirable properties (\lemrefs{lem:ptzero}, \ref{lem:ptmissparallel} -- \ref{lem:constituent}) for the %%@
backgrounds. 
The \mttwo\ distribution for the signal can still extend up to large values (close to $m_\gluino$ for the three-body decay),
albeit with fewer near-maximal events than would be found by using composite two-jet visible systems.
%The preferred choice between composite or single-jet visible systems will depend on the experimental uncertainties in the
%background determination and on the fraction of near-maximal events expected for any potential signal.
By inputting only the highest two \pt\ jets to \mttwo, not only do we take advantage of the background rejection properties, but %%@
we also combine many signal channels together, forming a larger sample of signal events (which therefore enjoys a larger %%@
statistical significance).

%In fact precisely the same analysis should work with cascade decays e.g. $\tilde{g}\to\tilde{q}q$ 
%followed by squark decay. One would expect that by selecting the two highest \pt\ jets
%one would often either (a) pick the jets from the squark decays (in which case all of the preceeding arguments
%about the expected range for \mttwo\ still apply or (b) pick higher \pt\ jets, in which case one might
%expect \mttwo\ to be larger again.
%Indeed there is already a significant contribution to the signal observed in \ref{fig:simulation} from gluino production. %%@
%\comment{Is this last statement actually true?}

Of course even having found a good discriminating variable one is not 
absolved from the need to understand the residual background contributions.
Previous studies \cite{cmsphystdr,Aad:2009wy} show that the rates of a wide variety of Standard Model processes can be measured in %%@
control regions using the LHC collision data. 
These are then extrapolated to the signal region using techniques which themselves are validated by Monte Carlo simulation.
Our \mttwo\ signal region has a non-negligible contribution from vector boson
production in association with two (or more) jets from initial state radiation.
The dependence of such ISR jets on parameters such as $\sqrt{s}$ and $\eta$ is therefore of considerable interest; jets at large %%@
$|\eta|$ could, for example, provide an ISR-dominated control region.\footnote{It has recently been suggested that ISR jets could %%@
themselves provide a good indicator of mass scale \cite{Papaefstathiou:2009hp}.}
The fact that there is a large number of background events in the low $\mttwo$ region $(\lsim m_t)$
might actually turn out to be rather helpful -- one can measure the total background contributions 
from this control region with high precision. 

Complementary measurements will clearly be needed to disentangle the various background processes,
so many observables will be used by the LHC experiments. 
Our analysis and simulations suggest that \mttwo, the natural kinematic observable for pair-produced particles, ought to be at the %%@
front of the queue.

\acknowledgements

We are grateful to C.G. Lester and to B. Gripaios for encouraging the wider 
dissemination of these results, and to Mihoko Nojiri and the IPMU, Tokyo, Japan, 
for hosting the workshop which prompted those discussions. 
Further thanks to C.G. Lester, B. Gripaios and to A. Pinder, and K. Matchev for helpful comments and discussions.
AJB and CG are supported by Advanced Fellowships from the UK Science and Technology Facilities Council.

%%%%%%%%%%%%%%%%%%%%%%%%%%%%%%%%%%%%%%%%%%%%%%%%%%%%%%%%%%%%%%%%%%%
\bibliography{mt2fordiscovery}
\end{document}